\newtheorem{corollary}{\bf Corollary}
\newtheorem{theorem}{\bf Theorem}
\theoremstyle{remark}
\newcommand{\norm}[1]{\left|\left|#1\right|\right|}
\title{Zero Attracting PNLMS Algorithm and Its Convergence in Mean} %(gOMP)}
\author{Rajib Lochan Das, ~\IEEEmembership{Member,~IEEE}, and Mrityunjoy
Chakraborty, ~\IEEEmembership{Senior Member,~IEEE}
%\thanks{Manuscript received \date.}
\thanks{R. L. Das is with the Department of Electronics and Communication Engineering, Dr. B. C. Roy Engineering College, Durgapur, INDIA.
M. Chakraborty is with the Department of Electronics and Electrical Communication
Engineering,
Indian Institute of Technology, Kharagpur, INDIA.
(e.mail : rajib.das.iit@gmail.com;
 mrityun@ece.iitkgp.ernet.in).}}
\begin{document}
% \title {A better bound on Generalized Orthogonal Matching Pursuit (gOMP)}
% \author{Siddhartha Satpathi}
% \maketitle

\maketitle
\thispagestyle{empty}
\pagestyle{empty}

\begin{abstract}  The proportionate normalized least mean square
(PNLMS) algorithm and its variants are by far the most popular
adaptive filters that are used to identify sparse systems. The
convergence speed of the PNLMS algorithm, though very high
initially, however, slows down at a later stage, even becoming
worse than sparsity agnostic adaptive filters like the NLMS. In
this paper, we address this problem by introducing a carefully
constructed $l_1$ norm (of the coefficients) penalty in the PNLMS
cost function which favors sparsity. This results in certain “zero
attractor” terms in the PNLMS weight update equation which
help in the shrinkage of the coefficients, especially the inactive
taps, thereby arresting the slowing down of convergence and also
producing lesser steady state excess mean square error (EMSE). We also carry
out the convergence analysis (in mean) of the proposed algorithm.

%We also demonstrate both analytically and also intuitively, that
%the EMSE can not, however, be reduced significantly by the zero
%attractors due to some fundamental shortcoming of the PNLMS
%algorithm, and propose methods to counter it by deploying a
%variable step size and also a variable proportionality constant
%for the zero attractors. Simulation results confirm excellent per-
%formance of the proposed algorithm vis-a-vis existing methods.

\end{abstract}

%{\bf Index Terms}--Sparse Adaptive Filter, PNLMS Algorithm, RZA-NLMS algorithm, convergence speed, steady state performance.
\begin{IEEEkeywords}
Sparse Adaptive Filter, PNLMS Algorithm, RZA-NLMS algorithm, convergence speed, steady state performance.
\end{IEEEkeywords}

\section{Introduction}
\IEEEPARstart{I}{n}
real life, there exist many examples of systems that have
a sparse impulse response, having a few significant non-zero
elements (called active taps) amidst several zero or insignificant
elements (called inactive taps). One example of such systems is the network echo
canceller \cite{Radecki}-\cite{Krishna}, which uses both
packet-switched and circuit-switched components and has a total
echo response of about 64-128 ms duration out of which the
``active" region spans a duration of only 8-12 ms, while the
remaining ``inactive" part accounts for bulk delay due to network
loading, encoding and jitter buffer delays. Another example is the
acoustic echo generated due to coupling between microphone and
loudspeaker in hands free mobile telephony, where the sparsity of
the acoustic channel impulse response varies with the
loudspeaker-microphone distance \cite{Hansler}. Other well known
examples of sparse systems include HDTV where clusters of dominant echoes arrive
after long periods of silence \cite{Schreiber}, wireless multipath
channels which, on most of the occasions, have only a few clusters of significant paths \cite{Bajwa},
and underwater acoustic channels where the various multipath components caused by reflections
off the sea surface and sea bed have long intermediate delays \cite{Kocic}.
% Examples of such systems
% include the network echo channel \cite{Radecki},\cite{Krishna}, hands free mobile
% telephony \cite{Hansler}, HDTV \cite{Schreiber}, wireless multipath channels \cite{Bajwa},
% acoustic channels in shallow underwater communication \cite{Kocic}
% etc.
The last decade witnessed a flurry of research activities \cite{RDas}
that sought to develop sparsity aware adaptive filters which can
exploit the a priori knowledge of the sparseness of the system
and thus enjoy improved identification performance. The first
and foremost in this category is the proportionate normalized
LMS (PNLMS) algorithm \cite{PNLMS} which achieves faster initial
convergence by deploying different step sizes for different
weights, with each one made proportional to the magnitude
of the corresponding weight estimate. The convergence rate
of the PNLMS algorithm, however, slows down at a later
stage of the iteration and becomes even worse than a sparsity
agnostic algorithm like the NLMS \cite{hay}. This problem was
later addressed in several of its variants like the improved
PNLMS (i.e. IPNLMS) algorithm \cite{IPNLMS}, composite proportionate
and normalized LMS (i.e. CPNLMS) algorithm [10] and mu
law PNLMS (i.e. MPNLMS) algorithm \cite{Deng}. These algorithms
improve the transient response (i.e. convergence speed) of the
PNLMS algorithm for identifying sparse systems. However,
all of them yield almost same steady-state excess mean square
error (EMSE) performance as produced by the PNLMS. The
need to improve both transient and steady-state performance
subsequently led to several variable step-size (VSS), proportionate
type algorithms \cite{Liu_vss}-\cite{Paleologu_vss}.

In this paper, drawing ideas from \cite{Chen}-\cite{Hero}, we aim to improve
the performance of the PNLMS algorithm further, by introducing
 a carefully constructed $l_1$ norm (of the coefficients)
penalty in the PNLMS cost function which favors
sparsity\footnote{Some preliminary results of this work  were
earlier presented by the authors at
%APSIPA ASC 2012
%\cite{Rajib_ZAPNLMS} and
ISCAS 2014 \cite{Rajib_vss_ISCAS}.}. This results in a modified
PNLMS update equation with a “zero attractor” for all the
taps, named as the Zero-Attracting PNLMS (ZA-PNLMS) algorithm. The
zero attractors help in the shrinkage of the coefficients which is
particularly desirable for the inactive taps, thereby giving rise
to lesser steady state EMSE for sparse systems. Further, by
drawing the inactive taps towards zero, the zero attractors help
in arresting the sluggishness of the convergence of the PNLMS
algorithm that takes place at a later stage of the iteration,
caused by the diminishing effective step sizes of the inactive
taps. We show this by presenting a detailed convergence analysis
of the proposed algorithm, which is, however, a very daunting
task, especially due to the presence of a so-called gain matrix
and also the zero attractors in the update equation. To overcome
the challenges posed by them, we deploy a transform domain
equivalent model of the proposed algorithm and separately, an
elegant scheme of angular discretization of continuous valued
random vectors proposed earlier in \cite{Slock}.

\section{Proposed Algorithm}
Consider a PNLMS based adaptive
filter that takes $x(n)$ as input and updates a $L$ tap
coefficient vector\\
 ${\bf
w}(n)=[w_{0}(n),w_{1}(n),\cdots,w_{L-1}(n)]^T$ as \cite{PNLMS},
\begin{equation}
{\bf w}(n+1)\,=\,{\bf w}(n) + \frac{\mu {\bf G}(n){\bf
x}(n)e(n)}{ {\bf x}^T(n){\bf G}(n){\bf x}(n)+ \delta_P},
\end{equation}
where ${\bf x}(n)\,=\,[x(n),x(n-1),\cdots, x(n-L+1)]^T$ is the
input regressor vector, ${\bf G}(n)$ is a diagonal matrix that
modifies the step size of each tap, $\mu $ is the overall step
size, $\delta_P $ is a regularization parameter and $e(n)
\;=\;d(n)\,-\,{\bf w}^T(n){\bf x}(n)$ is the filter output error,
with $d(n)$ denoting the so-called desired response. In the system
identification problem under consideration, $d(n)$ is the observed
system output, given as ${d}(n)\;=\;{\bf w}_{opt}^T\,{\bf
x}(n)+v(n)$, where ${\bf w}_{opt}$ is the system impulse response
vector (supposed to be sparse), $x(n)$ is the system input and
$v(n)$ is an observation noise which is assumed to be white with
variance $\sigma_{v}^{2}$ and independent of $x(m)$ for all $n$
and $m$.

The matrix ${\bf G}(n)$ is evaluated as
\begin{equation}
{\bf G}(n)\;=\;diag(g_0(n),g_1(n)....g_{L-1}(n)),
\end{equation}
where,
\begin{equation}
g_l(n)\,=\,\frac{\gamma_l(n)}{\sum_{\substack{i=0}}^{\substack{L-1}}\gamma_i(n)},\,0\leq{l}\leq{L-1},
\end{equation}
with
\begin{eqnarray}
\gamma_l(n)=\max[\rho_g\max[\delta,\mid{w_{0}(n)}\mid,..\mid{w_{L-1}(n)}\mid], \nonumber\\
\mid{w_{l}(n)}\mid].
\end{eqnarray}
The parameter $\delta$ is an initialization parameter that helps
to prevent stalling of the weight updating at the initial stage
when all the taps are initialized to zero. Similarly, if an
individual tap weight becomes very small, to avoid stalling of the
corresponding weight update recursion, the respective
$\gamma_l(n)$ is taken as a small fraction (given by the constant
$\rho_g$) of the largest tap magnitude. By providing separate
effective step size $\mu\,g_l(n)$ to each $l$-th tap where
$g_l(n)$ is broadly proportional to $|w_l(n)|$, the PNLMS
algorithm achieves higher rate of convergence initially,  caused
primarily by the active taps. At a later stage, however, the
convergence slows down considerably, being controlled primarily by
the numerically dominant inactive taps that have progressively
diminishing effective step sizes \cite{IPNLMS},\cite{Deng}.

It has recently been shown \cite{yukawaProjection} that the PNLMS
weight update recursion (i.e., Eq. (1)) can be obtained by
minimizing the cost function $\norm{{\bf w}(n+1)-{\bf
w}(n)}^2_{{\bf G}^{-1}(n)}$ subject to the condition $d(n)-{\bf
w}^T(n + 1){\bf x}(n) = 0$ (the notation $\parallel{\bf
x}\parallel_{\bf A}^2$ indicates the generalized inner product
${\bf x}^T{\bf A}{\bf x}$)). In order to derive the ZA-PNLMS
algorithm, following \cite{Chen}, we add an $l_1$ norm penalty
$\gamma\norm{{\bf G}^{-1}(n){\bf w}(n+1)}_1$ to the above cost
function, where $\gamma$ is a very very small constant. Note that
unlike \cite{Chen}, we have, however, used a generalized form of
$l_1$ norm penalty here which scales the elements of ${\bf
w}(n+1)$ by ${\bf G}^{-1}(n)$ first before taking the $l_1$ norm
(the above scaling makes the $l_1$ norm penalty governed primarily
by the inactive taps). The above constrained optimization problem
may then be stated as,:
\begin{equation}
\smash{\displaystyle\min_{{\bf w}(n+1)}}\parallel{\bf w}(n+1)-{\bf
w}(n)
 \parallel^2_{{ \bf G}^{-1}} + \gamma\parallel{ \bf G}^{-1}{\bf
 w}(n+1)\parallel_1
 \end{equation}
subject to $d(n)-{\bf w}^T(n+1){\bf x}(n)~=~0$, where the short
form notation ``${\bf G}^{-1}$" is used to indicate ${\bf
G}^{-1}(n)$.
 Using Lagrange multiplier $\lambda$, this amounts to minimizing
 the cost function
$J(n+1)=\parallel{\bf w}(n+1)-{\bf w}(n)\parallel^2_{{ \bf
G}^{-1}}+ \gamma\parallel{ \bf G}^{-1}{\bf w}(n+1)\parallel_1
 +\lambda(d(n)-{\bf w}^T(n+1){\bf x}(n))$. Setting ${\partial J(n+1)}/{\partial {\bf w}(n+1)}=0$, one
 obtains,
 \begin{equation}
  {\bf w}(n+1)\,=\,{\bf w}(n)-[\gamma sgn({\bf w}(n+1)) - \lambda{ \bf G}(n){\bf x}(n)]
 \end{equation}
 where  $sgn(.)$ is the well known signum function, i.e.,
$sgn(x)=1\;(x>0),\;0\;(x=0),\;-1\;(x<0)$. Premultiplying both the
LHS and the RHS of (6) by ${\bf x}^T(n)$ and using the condition
$d(n)-{\bf w}^T(n+1){\bf x}(n)~=~0$, one obtains,
 \begin{equation}
  \lambda=\frac{e(n)+\gamma{\bf x}^T(n)sgn({\bf w}(n+1))}{{\bf x}^T(n){ \bf G}(n){\bf x}(n)}.
 \end{equation}
 Substituting (7) in (6), we have,
 \begin{eqnarray}
 &&{\bf w}(n+1)={\bf w}(n)+\frac{e(n){ \bf G}(n){\bf x}(n)}{{\bf x}^T(n){ \bf G}(n){\bf x}(n)}\nonumber\\
 &&-\gamma\left[I-\frac{{\bf x}(n){\bf x}^T(n){ \bf G}(n)}{{\bf x}^T(n){ \bf G}(n){\bf x}(n)}\right]sgn({\bf w}(n+1)).
 \end{eqnarray}
 Note that the above equation does not provide the desired weight update
 relation, as the R.H.S. contains the unknown term $sgn({\bf
 w}(n+1))$. In order to obtain a feasible weight update equation,
 we approximate $sgn({\bf w}(n+1))$ by an estimate, namely,
 $sgn({\bf w}(n))$ which is known. This is based on the assumption
 that most of the weights do not undergo change of sign as they get
 updated. This assumption may not, however, appear to be a very
 accurate one, especially for the inactive taps that fluctuate
 around zero value in the steady state. Nevertheless, an analysis
 of the proposed algorithm, as given later in this paper, shows that
 this approximation does not have any serious effect on the convergence behavior of the
 proposed algorithm. Apart from this, we also observe that in (8),
 elements of the matrix $\frac{{\bf
 x}(n){\bf x}^T(n){ \bf G}(n)}{{\bf x}^T(n)
  { \bf G}(n){\bf x}(n)}$ have magnitudes much less than 1,
  especially for large order filters, and thus, this term can be neglected in comparison
  to $\bf I$.

 % Making the above substitution in (11),
% \begin{eqnarray}
% &&{\bf w}(n+1)={\bf w}(n)+\frac{e(n){ \bf G}(n){\bf x}(n)}{{\bf x}^T(n){ \bf G}(n){\bf x}(n)}\nonumber\\
% &&-\gamma\left[I-\frac{{\bf x}(n){\bf x}^T(n){ \bf G}(n)}{{\bf
% x}^T(n){ \bf G}(n){\bf x}(n)}\right]sgn({\bf w}(n)).
% \end{eqnarray}
 From above and introducing the
  algorithm step size $\mu$ and a regularization parameter $\delta_P$ in (8),
  for a large order adaptive filter, one then obtains the following weight update equation :
%As shown in \cite{Rajib_ZAPNLMS}, the above constrained minimization
%leads to the following ZA-PNLMS weight update recursion :
\begin{equation}\label{ZA_PNLMS_update}
{\bf w}(n+1)={\bf w}(n)+\frac{\mu e(n){ \bf G}(n){\bf x}(n)}{{\bf
x}^T(n) {\bf G}(n){\bf x}(n)+\delta_P}-\rho sgn({\bf w}(n))
\end{equation}
where $\rho=\mu\gamma$.

Eq. (9) provides the weight update relation for the proposed
ZA-PNLMS algorithm, where the second term on the R.H.S. is the
usual PNLMS update term while the last term, i.e., $\rho sgn({\bf
w}(n))$ is the so-called zero attractor. The zero attractor adds
$-\rho sgn(w_j(n))$ to $w_j(n)$ and thus helps in its shrinkage to
zero. Ideally, the zero attraction should be confined only to the
inactive taps, which means that the proposed ZA-PNLMS algorithm
will perform particularly well for systems that are highly sparse,
but its performance may degrade as the number of active taps
increases. In such cases, Eq. (9) may be further refined by
applying the reweighting concept \cite{Chen} to it. For this, we
replace the $l_1$ regularization term $\parallel{ \bf G}^{-1}{\bf
w}(n+1)\parallel_1$ in (5) by a log-sum penalty
$\sum_{\substack{i=1}}^{\substack{L}}\frac{1}{g_i(n)}log(1+\mid
w_i(n+1)\mid/\epsilon)$ where $g_i(n)$ is the $i$-th diagonal
element of ${\bf G}(n)$ and $\epsilon$ is a small constant.
Following the same steps as used above to derive the ZA-PNLMS
algorithm, one can then obtain the RZA-PNLMS weight update
equation as given by
\begin{eqnarray}
{w}_i(n+1)&=&{w}_i(n)+\frac{\mu g_i(n){x}(n-i+1)e(n)}{
{\bf x}^T(n){ \bf G}(n){\bf x}(n)+ \delta_P}\nonumber\\
 &-&\rho \frac{  sgn({w}_i(n))}{1+\varepsilon\mid{w_i(n)}\mid},
\;i=0,\,1,\cdots, L-1,\nonumber\\
\end{eqnarray}
where $\varepsilon={1}/{\epsilon}$ and $\rho=\mu\gamma\varepsilon$.
%where  $sgn(.)$ is the well known signum function (i.e.,
%$sgn(x)=1\;(x>0),\;0\;(x=0),\;-1\;(x<0)$ and $ \delta_N $ is the
%so-called regularization parameter to avoid a division by zero.
The last term of (10), named as reweighted zero attractor,
provides a selective shrinkage to the taps. Due to this reweighted
zero attractor, the inactive taps with zero magnitudes or
magnitudes comparable to $1/\varepsilon$ undergo higher shrinkage
compared to the active taps which enhances the performance both in
terms of convergence speed and steady state EMSE.
\section{Convergence Analysis of the Proposed ZA-PNLMS Algorithm}
A convergence analysis of the PNLMS algorithm is known to be a
daunting task, due to the presence of ${\bf G}(n)$ both in the
numerator and the denominator of the weight update term in (1),
which again depends on ${\bf w}(n)$. The presence of the zero
attractor term makes it further complicated for the proposed
ZA-PNLMS algorithm, i.e., Eq. (9). To analyze the latter, we
follow here an approach adopted recently in \cite{Rajib_PNLMS} in
the context of PNLMS algorithm. This involves development of an
equivalent transform domain model of the proposed algorithm first.
A convergence analysis of the proposed algorithm is then carried
out by applying to the equivalent model a scheme of angular
discretization of continuous valued random vectors proposed first
by Slock \cite{Slock} and used later by several other researchers
\cite{Sankaran}, \cite{Paul}.
\subsection{A Transform Domain Model of the Proposed Algorithm}
The proposed equivalent model uses a diagonal `transform' matrix
${\bf G}^{\frac{1}{2}}(n)$ with $[{\bf
G}^{\frac{1}{2}}(n)]_{i,i}=g^{\frac{1}{2}}_i(n),
i=0,1,\cdots,L-1$, to transform the input vector ${\bf x}(n)$ and
the filter coefficient vector ${\bf w}(n)$ to their `transformed'
versions, given respectively as ${\bf s}(n)\;=\;{\bf
G}^{\frac{1}{2}}(n){\bf x}(n)$ and ${\bf w}_{N}(n)=[{\bf
G}^{\frac{1}{2}}(n)]^{-1}{\bf w}(n)$. It is easy to check that
${\bf w}^T_{N}(n){\bf s}(n)={\bf w}^T(n){\bf x}(n)\equiv{\bf
y}(n)$ (say), i.e., the filter ${\bf w}_{N}(n)$ with input vector
${\bf s}(n)$ produces the same output $y(n)$ as produced by ${\bf
w}(n)$ with input vector ${\bf x}(n)$. To compute ${\bf
G}^{\frac{1}{2}}(n+1)$ and ${\bf w}_{N}(n+1)$, the filter ${\bf
w}_{N}(n)$ is first updated to a weight vector ${\bf
w}^{'}_{N}(n+1)$ as
% and then, defining ${\bf G}^{-\frac{1}{2}}(n){\bf w}(n+1)={\bf w}^{'}_N(n+1)$,
\begin{eqnarray}\label{ZA_PNLMS_TR}
 {\bf w}^{'}_N(n+1)&=&{\bf w}_N(n)+\frac{\mu e(n){\bf s}(n)}{{\bf s}^T(n){\bf s}(n)+\delta_P}\nonumber\\
%&=&{\bf w}_N(n)+\frac{\mu e(n){\bf s}(n)}{{\bf s}^T(n){\bf s}(n)+\delta_P}
%- \rho {\bf G}^{-\frac{1}{2}}(n)sgn({\bf w}(n))\nonumber\\
%&=&{\bf w}_N(n)+\frac{\mu e(n){\bf s}(n)}{{\bf s}^T(n){\bf s}(n)+\delta_P}
&-& \rho {\bf G}^{-\frac{1}{2}}(n)sgn({\bf w}_N(n)).
\end{eqnarray}
 From (9), it is easy to check that ${\bf w}(n+1)$ is given by ${\bf
w}(n+1)={\bf G}^{\frac{1}{2}}(n){\bf w}^{'}_{N}(n+1)$. The matrix
${\bf G}(n+1)$ follows from ${\bf w}(n+1)$ following its
definition and ${\bf w}_{N}(n+1)$ is then evaluated as ${\bf
w}_{N}(n+1)=[{\bf G}^{\frac{1}{2}}(n+1)]^{-1}{\bf w}(n+1)$.
%\begin{figure}[h]
%\begin{center}
%%\includegraphics[width=90mm, height=50mm]{cc2a.eps}
%\bfseries\includegraphics[width=90mm,height=45mm]{NLMS.eps}
%%\includegraphics[width=175mm]{Wiener.eps}
%%\fbox{ \rule[-17mm]{0pt}{30mm} FIGURE }
%\end{center}
%\caption{A NLMS based transform domain model of the PtNLMS algorithms.} %\vspace*{-3pt}%{\hfill\footnotesize
%%The symmetric property of the coefficients reduces the length of the
%%coefficient vector by half.\hfill}
%\end{figure}
From above, it follows that ${\bf w}_N(n+1)\;=\;{\bf
G}^{-\frac{1}{2}}(n+1){\bf w}(n+1)\;=\;{\bf
G}^{-\frac{1}{2}}(n+1){\bf G}^{\frac{1}{2}}(n){\bf w}^{'}_N(n+1)$,
meaning $[{\bf
w}_N(n+1)]_i~=~[\frac{g_i(n)}{g_i(n+1)}]^{\frac{1}{2}}[{\bf
w}^{'}_N(n+1)]_i, i=0,1,\cdots, L-1.$ Since
$\sum^{L-1}_{i=0}g_i(n)=1$ and $0<g_i(n)<1, i=0,1,\cdots,L-1$, it
is reasonable to expect that $g_i(n)$ does not change
significantly from
 index $n$ to index $(n+1)$ [especially near convergence and/or for large order filters] and thus, we can make the approximation
$[g_i(n)]^{\frac{1}{2}}[{\bf
w}^{'}_N(n+1)]_i\approx[g_i(n+1)]^{\frac{1}{2}}[{\bf
w}^{'}_N(n+1)]_i$, which implies ${\bf w}^{'}_N(n+1)~=~{\bf
w}_N(n+1)$.

\subsection{Angular Discretization of a Continuous Valued Random Vector \cite{Slock}}
As per this, given a zero mean, $L\times 1$ random vector ${\bf
x}$ with correlation matrix ${\bf R}=E[{\bf x}{\bf x}^T]$, it is
assumed that ${\bf x}$ can assume only one of the $2L$ orthogonal
directions, given by $\pm{\bf e}_i, i=0,1,\cdots,L-1$, where ${\bf
e}_i$ is the $i$-th normalized eigenvector of ${\bf R}$
corresponding to the eigenvalue $\lambda_i$. In particular, ${\bf
x}$ is modeled as ${\bf x}\,=\,s\,r\,{\bf v}$, where ${\bf v} \in
\{{\bf e}_i| i=0,1,\cdots,L-1\}$, with probability of ${\bf
v}={\bf e}_i$ given by $p_i$, $r=\norm{\bf x}$, i.e., $r$ has the
same distribution as that of $\norm{\bf x}$ and $s \in
\{1,\,-1\}$, with probability of $s=\pm 1$ given by $P(s=\pm
1)=\frac{1}{2}$. Further, the three elements $s,\,r$ and ${\bf v}$
are assumed to be mutually independent. Note that as $s$ is zero
mean, $E[s\,r\,{\bf v}]={\bf 0}$ and thus $E[{\bf x}]={\bf 0}$ is
satisfied trivially. To satisfy $E[{\bf x}{\bf x}^T]={\bf R}$, the
discrete probability $p_i$ is taken as
$p_i=\frac{\lambda_i}{\textit{Tr}\,[{\bf R}]}$, which satisfies
$p_i\ge 0$, $\sum_{i=0}^{L-1}p_i=1$ and leads to $E[{\bf x}{\bf
x}^T]=E(s^2\,r^2\,{\bf v}{\bf v}^T)=E(r^2)\,E({\bf v}{\bf
v}^T)=\textit{Tr}\,[{\bf R}]\sum_{i=0}^{L-1}p_i{\bf e}_i{\bf
e}^T_i=\sum_{i=0}^{L-1}\lambda_i{\bf e}_i{\bf e}^T_i={\bf R}$.
Also note that if $\theta_i$ be the angle between ${\bf x}$ and
${\bf e}_i$, then $\cos(\theta_i)=\frac{{\bf x}^T{\bf e}_i}{||{\bf
x}||}$ and $E[\cos^2(\theta_i)]\approx
\frac{\lambda_i}{\textit{Tr}\,[{\bf R}]}$, meaning $p_i$ provides
a measure of how far ${\bf x}$ is (angularly) from ${\bf e}_i$ on
an average.

In our analysis of the proposed algorithm, we use the above model
to represent the transformed input vector ${\bf s}(n)$ as
\begin{equation}\label{slock_regressor}
 {\bf s}(n)=s_s(n)\,r_s(n)\,{\bf v}_s(n),
\end{equation}
where, $s_s(n)\in \{+1,-1\}$ with $P(s_s(n)=\pm1)=\frac{1}{2}$,
$r_s(n)=\norm{{\bf s}(n)}$ and ${\bf v}_s(n)\in \{{\bf
e}_{s,i}(n)|\, i=0,1,\cdots,L-1\}$ with $P({\bf v}_{s}(n)={\bf
e}_{s,i}(n))=\frac{\lambda_{s,i}(n)}{Tr({\bf S}(n))}$, where,
${\bf S}(n)=E[{\bf s}(n){\bf s}^T(n)]$, $\lambda_{s,i}(n)$ is the
$i$-th eigenvalue of ${\bf S}(n)$, and
%${\bf e}_{s,i}(n)$ is the $i$-th eigenvector of ${\bf S}(n)$
%corresponding to the eigenvalue $\lambda_{s,i}(n)$. Also,
 as before, the three elements $s_s(n),\,r_s(n)$ and ${\bf v}_s(n)$ are mutually
independent.

\subsection{Convergence of the ZA-PNLMS Algorithm in mean}

Now, defining the weight error vector at the $n$-th index as $\tilde{{\bf w}}(n)\,=\,{\bf w}_{opt}-{\bf w}(n)$, the transform domain weight error vector
 $\tilde{{\bf w}_N}(n)\,\,=\,\,{\bf G}^{-\frac{1}{2}}(n)\tilde{{\bf w}}(n)\,\equiv\,$ ${\bf G}^{-\frac{1}{2}}(n){\bf w}_{opt}-{\bf w}_N(n)$ and
expressing $e(n)={\bf s}^T(n)\tilde{{\bf w}}_N(n)+v(n)$, the recursive form of the weight error vectors can then be obtained as
\begin{eqnarray}\label{ZA_trans_wd}
\tilde{{\bf w}}_N(n+1)
&\approx&\tilde{{\bf w}}_N(n)-\frac{\mu {\bf s}(n){\bf s}^T(n)\tilde{{\bf w}}_{N}(n)}{ {\bf s}^T(n){\bf s}(n)+ \delta_P}\nonumber\\
&-&\frac{\mu {\bf s}(n)v(n)}{ {\bf s}^T(n){\bf s}(n)+ \delta_P}+ \rho {\bf G}^{-\frac{1}{2}}(n)sgn({\bf w}_N(n)).\nonumber\\
\end{eqnarray}
For our analysis here, we approximate $\delta_P$ by zero in (\ref{ZA_trans_wd})
as $\delta_P$ is a very very small constant.
The first order convergence
of the ZA-PNLMS is then provided in the following theorem.
\begin{theorem}\label{ch3_1st_order}
With a zero-mean  input $x(n)$ of covariance matrix ${\bf R}$, the ZA-PNLMS algorithm produces stable $\bar{\bf w}_N(n)$ and  also $\bar{\bf w}(n)$ if the step-size $\mu$ satisfies
 $0< \mu <2$ and under this condition, $\bar{\bf w}_N(n)$ and $\bar{\bf w}(n)$ converge respectively  as per the following:
\begin{eqnarray}\label{ZA_first_order_wn_final}
 & & \bar{\bf w}_N(\infty)={\displaystyle\lim_{n\rightarrow\infty}}\bar{\bf w}_N(n)=E({\bf G}^{-\frac{1}{2}}(n))\big|_{\infty}{\bf w}_{opt}\nonumber\\
&-&\frac{\rho}{\mu}Tr({\bf S}(\infty)){\bf S}^{-1}(\infty){\displaystyle\lim_{n\rightarrow\infty}}E({\bf G}^{-\frac{1}{2}}(n)
 sgn({\bf w}_N(n)))\nonumber\\
\end{eqnarray}
and
\begin{eqnarray}\label{ZA_first_order_w_final}
 & &\bar{\bf w}(\infty)={\displaystyle\lim_{n\rightarrow\infty}}\bar{\bf w}(n)={\bf w}_{opt}-\frac{\rho}{\mu}Tr({\bf S}(\infty))\times\nonumber\\
  & &E({\bf G}^{\frac{1}{2}}(n))\big|_{\infty}{\bf S}^{-1}(\infty){\displaystyle\lim_{n\rightarrow\infty}}E({\bf G}^{-\frac{1}{2}}(n)
 sgn({\bf w}(n)),\nonumber\\
\end{eqnarray}
where ${\bf S}(n)=E({\bf s}(n){\bf s}^T(n))=E({\bf G}^{\frac{1}{2}}(n){\bf R}{\bf G}^{\frac{1}{2}}(n))$.
\end{theorem}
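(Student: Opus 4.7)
The plan is to drive everything from the transformed error recursion in (\ref{ZA_trans_wd}) with $\delta_P$ set to zero, substitute the angular discretization (\ref{slock_regressor}) so that the data-dependent kernels collapse, and then invoke standard independence approximations before taking expectations. Specifically, since $r_s^2(n)={\bf s}^T(n){\bf s}(n)$, the gradient kernel simplifies neatly as
\begin{equation*}
\frac{{\bf s}(n){\bf s}^T(n)}{{\bf s}^T(n){\bf s}(n)}={\bf v}_s(n){\bf v}_s^T(n),\qquad \frac{{\bf s}(n)v(n)}{{\bf s}^T(n){\bf s}(n)}=\frac{s_s(n){\bf v}_s(n)v(n)}{r_s(n)}.
\end{equation*}
Treating $\tilde{\bf w}_N(n)$, ${\bf G}(n)$, ${\bf v}_s(n)$, and $v(n)$ as mutually independent at index $n$, taking expectation kills the noise term because $E[s_s(n)]=0$, while $E[{\bf v}_s(n){\bf v}_s^T(n)]=\sum_i p_i(n){\bf e}_{s,i}(n){\bf e}_{s,i}^T(n)={\bf S}(n)/Tr({\bf S}(n))$ by the probability assignment in the Slock model.

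Next I would rewrite $E[\tilde{\bf w}_N(n)]=E[{\bf G}^{-\frac{1}{2}}(n)]{\bf w}_{opt}-\bar{\bf w}_N(n)$ and assemble the mean recursion
\begin{equation*}
\bar{\bf w}_N(n+1)=\Bigl[I-\mu\tfrac{{\bf S}(n)}{Tr({\bf S}(n))}\Bigr]\bar{\bf w}_N(n)+\mu\tfrac{{\bf S}(n)}{Tr({\bf S}(n))}E[{\bf G}^{-\frac{1}{2}}(n)]{\bf w}_{opt}-\rho\, E\bigl[{\bf G}^{-\frac{1}{2}}(n)sgn({\bf w}_N(n))\bigr].
\end{equation*}
Stability is governed by the spectrum of the state matrix. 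Because ${\bf S}(n)/Tr({\bf S}(n))$ is symmetric, positive semidefinite with unit trace, its eigenvalues lie in $(0,1]$, so every eigenvalue of $I-\mu\,{\bf S}(n)/Tr({\bf S}(n))$ is in $(-1,1)$ iff $0<\mu<2$, establishing the stated step-size bound.

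Under this bound, letting $n\to\infty$ and imposing the fixed-point condition $\bar{\bf w}_N(\infty)=\bar{\bf w}_N(n+1)=\bar{\bf w}_N(n)$ cancels the identity term; premultiplying both sides by $\frac{Tr({\bf S}(\infty))}{\mu}{\bf S}^{-1}(\infty)$ and rearranging yields (\ref{ZA_first_order_wn_final}). For (\ref{ZA_first_order_w_final}), I would exploit that ${\bf G}^{\frac{1}{2}}(n)$ is diagonal with positive entries, so ${\bf w}(n)={\bf G}^{\frac{1}{2}}(n){\bf w}_N(n)$ implies $sgn({\bf w}(n))=sgn({\bf w}_N(n))$, and the error vector satisfies $\tilde{\bf w}(n)={\bf G}^{\frac{1}{2}}(n)\tilde{\bf w}_N(n)$. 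Taking expectation under the same independence hypothesis gives $E[\tilde{\bf w}(\infty)]=E[{\bf G}^{\frac{1}{2}}(\infty)]E[\tilde{\bf w}_N(\infty)]$; plugging in the just-derived expression for $E[\tilde{\bf w}_N(\infty)]$ (which equals $\frac{\rho}{\mu}Tr({\bf S}(\infty)){\bf S}^{-1}(\infty)\lim E[{\bf G}^{-\frac{1}{2}}(n)sgn({\bf w}_N(n))]$) and substituting $\bar{\bf w}(\infty)={\bf w}_{opt}-E[\tilde{\bf w}(\infty)]$ produces (\ref{ZA_first_order_w_final}).

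The main obstacle is justifying the independence approximations that decouple ${\bf G}(n)$, ${\bf v}_s(n){\bf v}_s^T(n)$, and $\tilde{\bf w}_N(n)$ in the expectation steps, since the gain matrix is itself a nonlinear function of ${\bf w}(n)$ and the eigenstructure of ${\bf S}(n)$ evolves with ${\bf G}(n)$. I would appeal to the usual adaptive-filter slow-adaptation argument together with the transform-domain approximation ${\bf w}'_N(n+1)\approx{\bf w}_N(n+1)$ already established in Section~III-A, which limits how rapidly ${\bf G}(n)$ can drift, and to the assumption that at (and near) convergence ${\bf S}(n)$ has stabilized so that the expectations and matrix inverse in the final formulas are well defined. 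A secondary technical point is ensuring ${\bf S}(\infty)$ is invertible, which I would briefly note follows from ${\bf R}\succ 0$ and $g_i(\infty)>0$ for all $i$ (guaranteed by the $\delta$/$\rho_g$ floor in the definition of ${\bf G}(n)$).
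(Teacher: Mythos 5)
Your proposal is correct and follows essentially the same route as the paper: take expectations of the transformed error recursion under the independence assumption, identify the state matrix, show its eigenvalues force contraction for $0<\mu<2$, extract the fixed point, and map back through ${\bf G}^{\frac{1}{2}}(n)$. The only notable (cosmetic) differences are that you invoke the Slock discretization at the outset to get ${\bf S}(n)/Tr({\bf S}(n))$ directly, whereas the paper first bounds the eigenvalues of ${\bf B}(n)=E[{\bf s}(n){\bf s}^T(n)/{\bf s}^T(n){\bf s}(n)]$ via Cauchy--Schwarz and only applies the discretization at the end, and the paper is more explicit about boundedness under the zero-attractor forcing term by unrolling the norm recursion $\theta(n+1)\leq k(n)\theta(n)+\rho c(n)$.
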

\begin{proof}
 For analysis, we now substitute $\delta=0$ in (\ref{ZA_trans_wd}) as $\delta$ is a very very small constant. Taking expectation of both sides of
(\ref{ZA_trans_wd}) and invoking the well known ``independence assumption'' that allows taking ${\bf w}_N(n)$ to be statistically independent of ${\bf s}(n)$,
we then obtain,
\begin{eqnarray}\label{ZA_trans_Ewd}
 &E(\tilde{{\bf w}_N}(n+1))~=~ E(\tilde{{\bf w}_N}(n))-\mu E \left(\frac{ {\bf s}(n){\bf s}^T(n) }{ {\bf s}^T(n){\bf s}(n)}\right)E(\tilde{{\bf w}_{N}}(n))+ \rho E({\bf G}^{-\frac{1}{2}}(n)sgn({\bf w}_N(n)))\nonumber\\
&\Rightarrow E(\tilde{{\bf w}_N}(n+1))~=~ ({\bf I}-\mu {\bf B}(n))E(\tilde{{\bf w}_N}(n))+ \rho E({\bf G}^{-\frac{1}{2}}(n)sgn({\bf w}_N(n)))
\end{eqnarray}
where
\begin{equation}
 {\bf B}(n)~=~ E \left(\frac{ {\bf s}(n){\bf s}^T(n) }{ {\bf s}^T(n){\bf s}(n)}\right).
\end{equation}
Note that ${\bf B}(n)$ is symmetric and therefore, one can have its eigendecomposition  ${\bf B}(n)={\bf E}(n){\bf D}(n){\bf E}^T(n)$ where
${\bf E}(n)=[{\bf e}_0(n)\,{\bf e}_1(n)\cdots {\bf e}_{L-1}(n)]$, ${\bf D}(n)=\textit{diag}\,[\lambda_0(n),$ $\lambda_1(n),\cdots,\lambda_{L-1}(n)]$, with
${\bf e}_i(n)$ and $\lambda_i(n)$, $i=0,1,\cdots,L-1$ denoting the $i$-th eigenvector and eigenvalue of ${\bf B}(n)$ respectively. The eigenvalues are real
and the eigenvectors ${\bf e}_i(n)$ are mutually orthonormal, meaning ${\bf E}(n)$ is unitary, i.e., ${\bf E}^T(n){\bf E}(n)={\bf E}(n){\bf E}^T(n)={\bf I}$. From
${\bf B}(n){\bf e}_i(n)=\lambda_i(n){\bf e}_i(n)$ and the fact that $\norm{{\bf e}_i(n)}^2=1$, it is easy to observe that
\begin{equation*}
 \lambda_i(n)={\bf e}^T_i(n){\bf B}(n){\bf e}_i(n)=E\left[\frac{\left[{\bf s}^T(n){\bf e}_i(n)\right]^2}{\norm{{\bf s}(n)}^2}\right].
\end{equation*}
Two observations can be made now:
\begin{enumerate}
 \item  $\lambda_i(n)>0$ [ theoretically, one can have   $\lambda_i(n)=0$ also, provided ${\bf s}^T(n){\bf e}_i(n)=0$, i.e., ${\bf s}(n)$ is orthogonal to
${\bf e}_i(n)$ in each trial, which is ruled out here].
\item From Cauchy-Schwarz inequality, $[{\bf s}^T(n){\bf e}_i(n)]^2\leq \norm{{\bf s}(n)}^2\norm{{\bf e}_i(n)}^2=\norm{{\bf s}(n)}^2$, meaning
$\lambda_i(n)\leq 1$.
\end{enumerate}
%
%
%As discussed in section 2.2.2 that ${\bf B}(n)$ is symmetric and its eigendecomposition is expressed as ${\bf B}(n)={\bf E}(n){\bf D}(n){\bf E}^T(n)$ where
%${\bf E}(n)=[{\bf e}_0(n)\,{\bf e}_1(n)\cdots {\bf e}_{L-1}(n)]$, ${\bf D}(n)=\textit{diag}\,[\lambda_0(n),$ $\lambda_1(n),\cdots,\lambda_{L-1}(n)]$, with
%${\bf e}_i(n)$ and $\lambda_i(n)$, $i=0,1,\cdots,L-1$ denoting the $i$-th eigenvector and eigenvalue of ${\bf B}(n)$ respectively. It is also shown that
%the eigenvalues are real, positive and upper bounded by 1, i.e., $0<\lambda_i(n)\leq 1,i=0,1,\cdots,L-1 $, and  the eigenvectors ${\bf e}_i(n)$ are mutually
% orthonormal, meaning ${\bf E}(n)$ is unitary, $i.e.$ ${\bf E}^T(n){\bf E}(n)={\bf E}(n){\bf E}^T(n)={\bf I}$.
Pre-multiplying both sides of  (\ref{ZA_trans_Ewd}) by ${\bf E}^T(n)$, defining ${\bf u}(n)={\bf E}^T(n)E(\tilde{{\bf w}_N}(n))$,
${\bf v}(n+1)={\bf E}^T(n)E(\tilde{{\bf w}_N}(n+1))$, ${\bf z}(n)={\bf E}^T(n)E({\bf G}^{-\frac{1}{2}}(n)sgn({\bf w}_N(n)))$,
 substituting ${\bf B}(n)$ by ${\bf E}(n){\bf D}(n){\bf E}^T(n)$ and using the unitariness of ${\bf E}(n)$,
we have,
\begin{equation}\label{ZA_ortho_eqn}
 {\bf v}(n+1)=({\bf I}-\mu {\bf D}(n)){\bf u}(n)+\rho {\bf z}(n).
\end{equation}
Taking norm on both sides of (\ref{ZA_ortho_eqn}) and invoking triangle inequality property of norm, i.e., $\norm{{\bf a}+{\bf b}}\leq\norm{{\bf a}}+\norm{{\bf b}}$, we then
obtain,
\begin{equation}\label{ZA_ortho_eqn_norm}
 \norm{{\bf v}(n+1)}\leq\norm{({\bf I}-\mu {\bf D}(n)){\bf u}(n)}+\rho \norm{{\bf z}(n)}.
\end{equation}
Since ${\bf E}(n)$ is unitary, we have $\norm{{\bf v}(n+1)}=\norm{E(\tilde{{\bf w}_N}(n+1))}$, $\norm{{\bf u}(n)}=\norm{E(\tilde{{\bf w}_N}(n))}$ and \\
$\norm{{\bf z}(n+1)}=\norm{E\left[{\bf G}^{-\frac{1}{2}}(n)sgn({\bf w}_N(n))\right]}$. Using the fact that
$\{ E[g^{-\frac{1}{2}}_{ii}(n)sgn({\bf w}_N(n))]\}^2$ $\leq E(g^{-1}_{ii}(n))$ (i.e., using Cauchy Schwarz inequality and the fact that
$sgn^2(.)=1$), we can write $\norm{{\bf z}(n+1)}\leq \sqrt{\sum_{i=0}^{L-1}E(g^{-1}_{ii}(n))}=c(n)$ (say).
 Clearly $c(n)$ is finite, as
${\bf G}(n)$ is a diagonal matrix with only positive elements.
From (\ref{ZA_ortho_eqn_norm}), one can then write,
\begin{equation}\label{ZA_ortho_norm}
 \norm{E(\tilde{{\bf w}_N}(n+1))}\leq\sqrt{\sum_{i=0}^{L-1}(1-\mu \lambda_i(n))^2|u_i(n)|^2}+\rho c(n).
\end{equation}
We now select $\mu$ so that $|1-\mu \lambda_i(n)|<1$, or , equivalently, $-1<1-\mu \lambda_i(n)<1$, which leads to the following:
\begin{enumerate}
 \item $\mu\lambda_i(n)>0$, meaning $\mu>0$ as $\lambda_i(n)>0$ (as explained above).
\item $\mu<\frac{2}{\lambda_i(n)}$ (since $\lambda_i(n)\leq1$ or, equivalently $\frac{1}{\lambda_i(n)}\geq 1$, it will be sufficient to take $\mu<2$ for satisfying this inequality).
\end{enumerate}
Therefore, for $0<\mu<2$, we have $|1-\mu \lambda_i(n)|<1$, where $0<\lambda_i(n)\leq1$. Let $\norm{E(\tilde{{\bf w}_N}(n))}=\theta(n)$
and $k(n)=\textit{max}\,\{|(1-\mu \lambda_i(n))|,\,i=0,1,\cdots,L-1\}$, meaning $0\leq k(n)<1$. From
(\ref{ZA_ortho_norm}), one can then write,
\begin{equation}
 \theta(n+1)\leq k(n)\sqrt{\sum_{i=0}^{L-1}|u_i(n)|^2}+ \rho c(n)=k(n)\theta(n)+ \rho c(n).
\end{equation}
Proceeding recursively backwards till $n=0$,
\begin{equation}\label{ZA_dynamic_theta}
 \theta(n+1)\leq {\displaystyle \prod_{i=0}^{n}}k(n-i)\,\theta(0)+ \rho\left(c(n)+\sum_{l=0}^{n-1}\left(\prod_{i=0}^{l}k(n-i)\right)c(n-l-1)\right).
\end{equation}
Clearly, for $0\leq k(n)<1$, the first term of RHS of (\ref{ZA_dynamic_theta}) vanishes as $n$ approaches infinity. For the second term, $c(n)$ is a
bounded sequence, which, in steady state, can be taken to be time invariant, say $c$, as the variation of $g_{ii}(n)$ vs. $n$, $i=0,1,\cdots,L-1$ in
the steady state are negligible. Also note that $\prod_{i=0}^{l}k(n-i)$ is a decaying function of $l$ since $0\leq k(m)<1$ at any index $m$. From these,
 one can write
${\displaystyle\lim_{n\rightarrow\infty}}\norm{E(\tilde{{\bf w}_N}(n))}={\displaystyle\lim_{n\rightarrow\infty}}\theta(n)\leq \rho K$ where $K$ is a
positive constant.
% and its second term remains bounded as
% $\rho$ is finite positive constant and $c(n)$ is a bounded sequence. Let us consider ${\displaystyle\lim_{n\rightarrow\infty}}\theta(n)=K$, a positive constant,
% meaning ${\displaystyle\lim_{n\rightarrow\infty}}\norm{E(\tilde{{\bf w}_N}(n))}=K$.
Recalling that
$\tilde{{\bf w}}(n)={\bf G}^{\frac{1}{2}}(n)\tilde{{\bf w}_N}(n)$, we can then write
${\displaystyle\lim_{n\rightarrow\infty}}\norm{E(\tilde{{\bf w}}(n))}\approx{\displaystyle\lim_{n\rightarrow\infty}}
\norm{E({\bf G}^{\frac{1}{2}}(n))E(\tilde{{\bf w}_N}(n))}<{\displaystyle\lim_{n\rightarrow\infty}}\norm{E(\tilde{{\bf w}_N}(n))}\leq \rho K$. Since
$\rho$ is very small, this
implies that $E({\bf w}(n))$ will remain in close vicinity of ${\bf w}_{opt}$ in the steady state under the condition: $0<\mu<2$. In other words, $E({\bf w}(n))$
will provide a biased estimate of ${\bf w}_{opt}$, though the bias, being proportional to $\rho$, will be negligibly small.

Under the condition $0<\mu<2$, letting $n$ approach infinity on both the LHS and the RHS of
(\ref{ZA_trans_Ewd}) and noting that as $n\rightarrow\infty$, $E(\tilde{{\bf w}_N}(n+1))\approx E(\tilde{{\bf w}_N}(n))$,
one can obtain from (\ref{ZA_trans_Ewd}),
\begin{eqnarray}\label{ZA_first_order_wn}
  {\displaystyle\lim_{n\rightarrow\infty}}E(\tilde{{\bf w}}_N(n))= \frac{\rho}{\mu}{\bf B}^{-1}(\infty){\displaystyle\lim_{n\rightarrow\infty}}E({\bf G}^{-\frac{1}{2}}(n)
\times sgn({\bf G}^{\frac{1}{2}}(n){\bf w}_N(n))).
\end{eqnarray}
Further, ${\bf B}(n)$ can be simplified in terms of ${\bf S}(n)$ by invoking the angular discretization model of a
random vector as discussed in the section III.B. We replace ${\bf s}(n)$ by $s_s(n)r_s(n){\bf v}_s(n)$ as given by (\ref{slock_regressor}). One can then write,
\begin{eqnarray}\label{Bn_simplification}
 {\bf B}(n)&=&E\left(\frac{{\bf s}(n){\bf s}^T(n)}{{\bf s}^T(n){\bf s}(n)}\right)
=E\left(\frac{r^2_s(n){\bf v}_s(n){\bf v}_s^T(n)}{r^2_s(n){\bf v}^T_s(n){\bf v}_s(n)}\right)\hspace{10mm} (\textit{since $s_s^2(n)=1$})\nonumber\\
&=&\sum_{i=0}^{L-1}\frac{\lambda_{s,i}(n)}{Tr({\bf S}(n))} {\bf e}_{s,i}(n){\bf e}^T_{s,i}(n)
=\frac{{\bf S}(n)}{Tr({\bf S}(n))},
\end{eqnarray}
since ${\bf S}(n)=\sum_{i=0}^{L-1}\lambda_{s,i}(n){\bf e}_{s,i}(n){\bf e}^T_{s,i}(n)$.

Letting $n$ approach infinity in (\ref{Bn_simplification}) and substituting this in (\ref{ZA_first_order_wn}),
\begin{eqnarray}\label{ZA_first_order_wn_final1}
  {\displaystyle\lim_{n\rightarrow\infty}}E(\tilde{{\bf w}}_N(n))=\frac{\rho}{\mu}Tr({\bf S}(\infty)){\bf S}^{-1}(\infty){\displaystyle\lim_{n\rightarrow\infty}}E({\bf G}^{-\frac{1}{2}}(n)
\times sgn({\bf G}^{\frac{1}{2}}(n){\bf w}_N(n))).
\end{eqnarray}
Recalling $\tilde{{\bf w}}(n)={\bf G}^{\frac{1}{2}}(n)\tilde{{\bf w}_N}(n)$, from (\ref{ZA_first_order_wn_final1}) we have,
\begin{eqnarray}\label{ZA_first_order_w_final1}
  {\displaystyle\lim_{n\rightarrow\infty}}E(\tilde{{\bf w}}(n))&\approx& {\displaystyle\lim_{n\rightarrow\infty}}E({\bf G}^{\frac{1}{2}}(n))E(\tilde{{\bf w}}_N(n))\nonumber\\
&=& \frac{\rho}{\mu}Tr({\bf S}(\infty)) E({\bf G}^{\frac{1}{2}}(n))\big|_{\infty}{\bf S}^{-1}(\infty){\displaystyle\lim_{n\rightarrow\infty}}E({\bf G}^{-\frac{1}{2}}(n)
 sgn({\bf w}(n))).\nonumber\\
\end{eqnarray}
Further, we have $E({\bf w}_N(n))= E({\bf G}^{-\frac{1}{2}}(n)){\bf w}_{opt}- E(\tilde{{\bf w}}_N(n))$ and $E({\bf w}(n))={\bf w}_{opt}- E(\tilde{{\bf w}}(n))$, and
thus, with (\ref{ZA_first_order_wn_final1}) and (\ref{ZA_first_order_w_final1}), this completes the proof.

\end{proof}
%Theorem 1 shows that $ E({\bf w}(\infty))$ converges to the vicinity of the ${\bf w}_{opt}$, but with a bias.
\begin{corollary}
 For white input, $\bar{w}_i(\infty) (={\displaystyle\lim_{n\rightarrow\infty}}E(w_i(n)))$ for the $i$-th active tap (i.e., for which $w_{opt,i}\neq0$) is approximately given by
\begin{equation}
 \bar{w}_i(\infty)=w_{opt,i}-\frac{\rho}{\mu}\bar{g}^{-1}_i(\infty)sgn(w_{opt,i})
\end{equation}
where $\bar{g}_i(\infty)={\displaystyle\lim_{n\rightarrow\infty}}\bar{g}_i(n)$ and $\bar{g}_i(n)=[E({\bf G}(n))]_{i,i}$.
\end{corollary}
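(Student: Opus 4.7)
The plan is to specialize equation (\ref{ZA_first_order_w_final}) from Theorem \ref{ch3_1st_order} to the white input case and then to use two simplifying observations that are natural for active taps in the steady state. First I would set ${\bf R}=\sigma_x^2{\bf I}$, so that ${\bf S}(n)=E({\bf G}^{1/2}(n){\bf R}{\bf G}^{1/2}(n))=\sigma_x^2 E({\bf G}(n))$ is diagonal with entries $\sigma_x^2\bar{g}_i(n)$. Because the diagonal entries of ${\bf G}(n)$ sum to one by construction (see equation (3)), one has $Tr({\bf S}(\infty))=\sigma_x^2\sum_{i=0}^{L-1}\bar{g}_i(\infty)=\sigma_x^2$, and ${\bf S}^{-1}(\infty)=\sigma_x^{-2}\,\bar{\bf G}^{-1}(\infty)$.

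Next I would substitute these into (\ref{ZA_first_order_w_final}). The factor $Tr({\bf S}(\infty))\,{\bf S}^{-1}(\infty)$ collapses to $\bar{\bf G}^{-1}(\infty)$, and the remaining matrices $E({\bf G}^{1/2}(n))|_\infty$ and $E({\bf G}^{-1/2}(n)\,sgn({\bf w}(n)))|_\infty$ are both diagonal, so (\ref{ZA_first_order_w_final}) splits componentwise into
\begin{equation*}
\bar{w}_i(\infty)=w_{opt,i}-\frac{\rho}{\mu}\,E(g_i^{1/2}(n))\big|_\infty\,\bar{g}_i^{-1}(\infty)\,\lim_{n\to\infty}E\bigl(g_i^{-1/2}(n)\,sgn(w_i(n))\bigr).
\end{equation*}
For an active tap, $w_i(n)$ fluctuates in the steady state about a value close to $w_{opt,i}\neq 0$, so $sgn(w_i(n))$ stabilises to $sgn(w_{opt,i})$ almost surely, allowing it to be pulled out of the expectation. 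This reduces the last factor to $sgn(w_{opt,i})\,E(g_i^{-1/2}(n))|_\infty$.

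Finally, I would argue that in the steady state the diagonal entry $g_i(n)$ exhibits only small fluctuations about its mean $\bar{g}_i(\infty)$ (this is the same near-stationarity approximation already invoked in Section III.A to justify ${\bf w}_N'(n+1)\approx {\bf w}_N(n+1)$). Under this approximation, $E(g_i^{1/2}(n))|_\infty\,E(g_i^{-1/2}(n))|_\infty\approx 1$, after which the stated expression for $\bar{w}_i(\infty)$ follows directly.

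The main obstacle is the last step: formally justifying that the product of the two expectations of $g_i^{\pm 1/2}(n)$ collapses to one. I would rely on the same steady-state/stationarity approximation used throughout Section III, pointing out that $g_i(n)$ is driven by the slowly varying tap magnitudes and therefore behaves nearly deterministically once convergence is reached. Everything else is an essentially algebraic specialisation of Theorem \ref{ch3_1st_order} to white input.
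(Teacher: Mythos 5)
Your proposal is correct and follows essentially the same route as the paper: specialize Theorem~\ref{ch3_1st_order} to ${\bf R}=\sigma_x^2{\bf I}$ so that $Tr({\bf S}(\infty)){\bf S}^{-1}(\infty)$ collapses to $E({\bf G}(\infty))^{-1}$, then use the steady-state near-determinism of $g_i(n)$ (your $E(g_i^{1/2})E(g_i^{-1/2})\approx 1$ is the paper's $E({\bf G}^{-1/2}(n))\approx E({\bf G}(n))^{-1/2}$ plus its independence assumption between ${\bf G}^{-1/2}(n)$ and $sgn({\bf w}(n))$) and the approximation $sgn(w_i(n))\approx sgn(w_{opt,i})$ for active taps. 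The only difference is cosmetic ordering of these approximations.
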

\begin{proof}
For white input with variance $\sigma^2_x$,
we have ${\bf R}=\sigma^2_x{\bf I}$, ${\bf S}(n)=\sigma^2_xE({\bf G}(n))$, $Tr({\bf S}(n))=\sigma^2_x$ and ${\bf S}^{-1}(n)=\frac{1}{\sigma^2_x}E({\bf G}(n))^{-1}$
and then, we can have a simplified expression of $\bar{\bf w}(\infty)$ as
\begin{equation}\label{Ew_equation}
\bar{\bf w}(\infty)\approx{\bf w}_{opt}-\frac{\rho}{\mu}{\displaystyle\lim_{n\rightarrow\infty}}E({\bf G}(n))^{-1} E(sgn({\bf w}(n)))
\end{equation}
where we have assumed that in the steady state as ${n\rightarrow\infty}$, ${\bf G}^{-\frac{1}{2}}(n)$ and $sgn({\bf w}(n))$ become statistically independent and
$E({\bf G}(n)^{-\frac{1}{2}})\approx E({\bf G}(n))^{-\frac{1}{2}}$, which is reasonable as in the steady state, variance of each individual
 $g_i(n), i=0,1,\cdots,L-1$ is quite small (i.e., it behaves almost like a constant). Now, for an active tap with significantly large magnitude
$w_{opt,i}$, it is reasonable to approximate $sgn({ w}_i(n))\approx sgn(w_{opt,i})$ under the assumption that in the steady state,
the variance of ${ w}_i(n)$, i.e., $E(({ w}_i(n)-w_{opt,i})^2)$ is small enough compared to the magnitude of $w_{opt,i}$. Then, with
$E(sgn({ w}_i(n)))\approx E(sgn(w_{opt,i}))=sgn(w_{opt,i})$ for an active tap in the steady state, the result follows trivially from (\ref{Ew_equation}).
\end{proof}
Corollary 1 shows that
\begin{numcases}{\bar{w}_i(\infty)=}
     w_{opt,i}-\frac{\rho}{\mu}\bar{g}^{-1}_i(\infty),\, \text{if}\, sgn(w_{opt,i})>0\nonumber\\
     w_{opt,i}+\frac{\rho}{\mu}\bar{g}^{-1}_i(\infty),\,  \text{if}\, sgn(w_{opt,i})<0, \nonumber
    \end{numcases}
which implies that  $\bar{w}_i(\infty)$ is always closer to the origin vis-a-vis $w_{opt,i}$. Further, the bias
(i.e., usually defined as $w_{opt,i}-\bar{w}_i(\infty)$)
 is also proportional
to $\bar{g_i}^{-1}(\infty)$, meaning active taps with comparatively smaller values will have larger bias and vice versa.

In the case of inactive taps, we have $ w_{opt,i}=0$. From (14) and for $\rho=0$ (i.e., no zero attraction), this implies $\bar{w}_i(\infty)=0$, i.e., the
tap estimates fluctuate around zero value. For $\rho>0$, the zero attractors come into play in the update equation (7) and act as an additional
force that tries to pull the coefficients to zero from either side. The effect of zero attractor is thus to confine the fluctuations in a small
band around zero. On an average, one can then take $E(sgn({w}_i(n)))\big|_{\infty}\approx0$, meaning, from (16), the inactive tap estimates will largely
be free of any bias.

\section{Numerical Simulations}
In this section, we investigate evolution  of $E({\bf w}(n))$ of
the proposed ZA-PNLMS algorithm with time via simulation studies
in the context of  sparse system identification. For this, we
considered a sparse system with impulse responses of length L=512
as shown in Fig. 1. The system has 37 active taps and is driven by
a zero mean, white input $x(n)$  of variance $\sigma^2_x=1$, with
the output observation noise $v(n)$ being taken to be zero mean,
white Gaussian with $\sigma^2_v=10^{-3}$. The proposed ZA-PNLMS
algorithm is used to identify the system, for which the step size
$\mu$, the zero attracting coefficient $\rho$ and the
regularization parameter (to avoid division by zero) are taken to
be 0.7, 0.0001 and 0.01 respectively, while $\rho_g$ and $\delta$
are chosen as  $0.01$ and $0.001$ respectively. The simulations
are carried out for a total of 25,000 iterations and for each tap
weight $w_i(n)$, the learning curve $E[w_i(n)]$ vs $n$ is
evaluated by averaging $w_i(n)$ over 30 experiments. For
demonstration here, we consider four representative learning
curves, for $i$=37, 55, 67, 1. (the corresponding $w_{opt,i}$
given by 0.9, 0.1,-0.05 and 0 respectively). These are shown in
Figs. 2-5 respectively where it is seen that for both the inactive
tap (i.e., $w_{opt,1}$) and the active tap with relatively large
magnitude (i.e., $w_{opt,37}(n)$), $E[w_i(n)]$ converges to its
optimum values of 0 and 0.9 respectively. On the other hand, for
$w_{opt,67}(n)$ and $w_{opt,55}(n)$, i.e., for active taps with
relatively less magnitudes, $E[w_i(n)]$ converges with reasonably
large bias. This validates our conjectures made in section III
(Corollary 1 and the subsequent analysis). To validate the same
further, the bias is calculated from the learning curves (in the
steady state) for all the taps and then plotted in Fig. 6 against
the magnitude of the optimum tap weights. Clearly, the bias
becomes negligible as the magnitude of the active tap increases.

\begin{figure}
\begin{center}
\includegraphics[width=150mm,height=110mm]{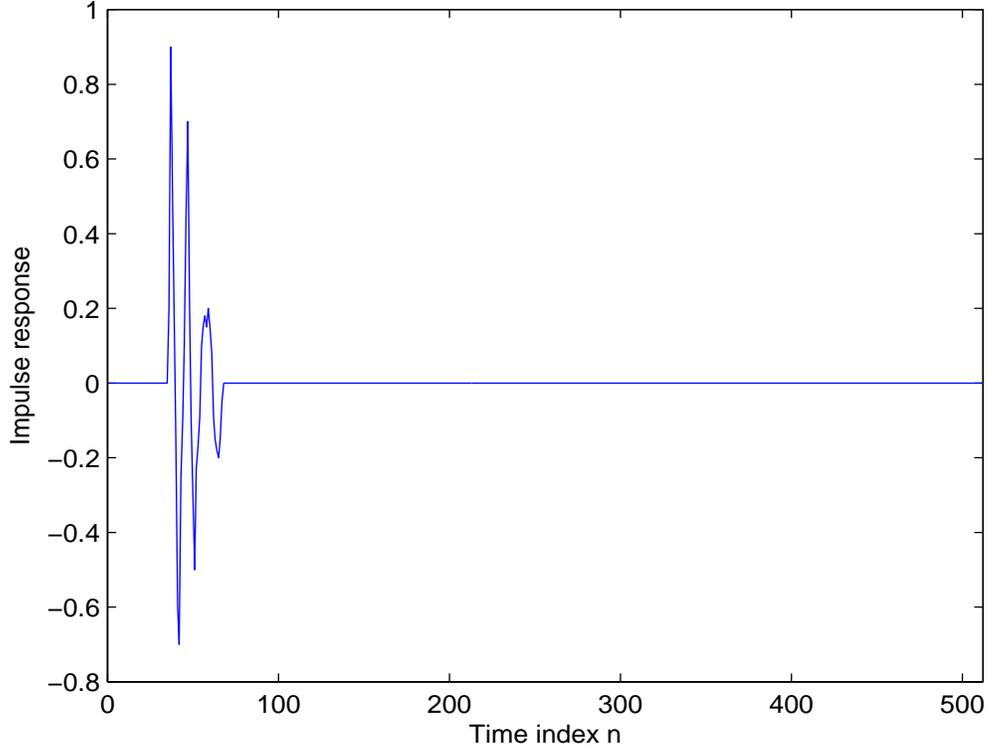}
%\includegraphics[width=175mm]{Wiener.eps}
%\fbox{ \rule[-17mm]{0pt}{30mm} FIGURE }
\end{center}
\caption{Impulse response of the sparse system} %\vspace*{-3pt}%{\hfill\footnotesize
%The symmetric property of the coefficients reduces the length of the
%coefficient vector by half.\hfill}
\end{figure}
\begin{figure}
\begin{center}
\includegraphics[width=150mm,height=110mm]{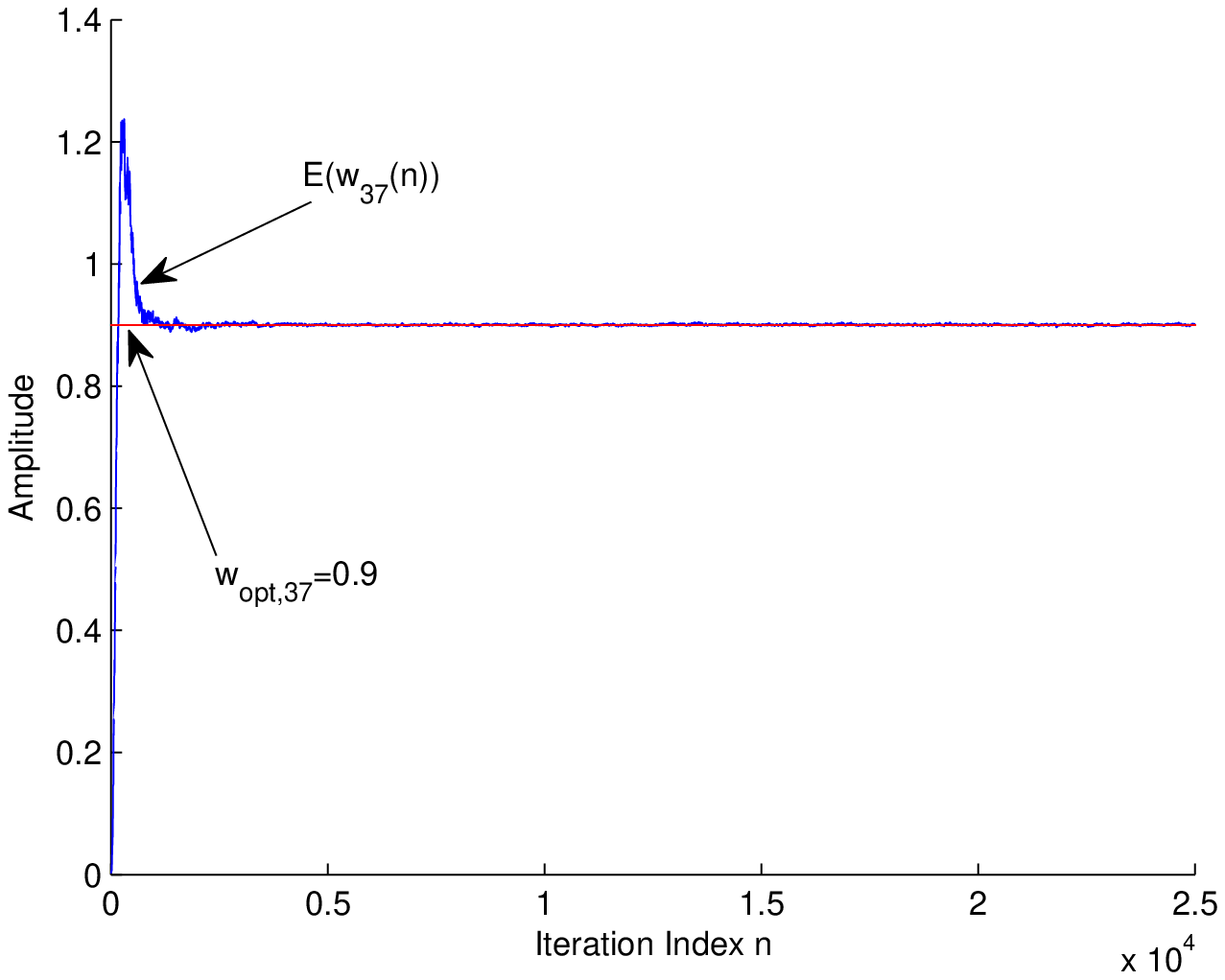}
%\includegraphics[width=175mm]{Wiener.eps}
%\fbox{ \rule[-17mm]{0pt}{30mm} FIGURE }
\end{center}
\caption{Evolution of $E(w_{37}(n))$ of the ZA-PNLMS algorithm  with its optimum level $w_{opt,37}=0.9$.} %\vspace*{-3pt}%{\hfill\footnotesize
%The symmetric property of the coefficients reduces the length of the
%coefficient vector by half.\hfill}
\end{figure}
\begin{figure}
\begin{center}
\includegraphics[width=150mm,height=110mm]{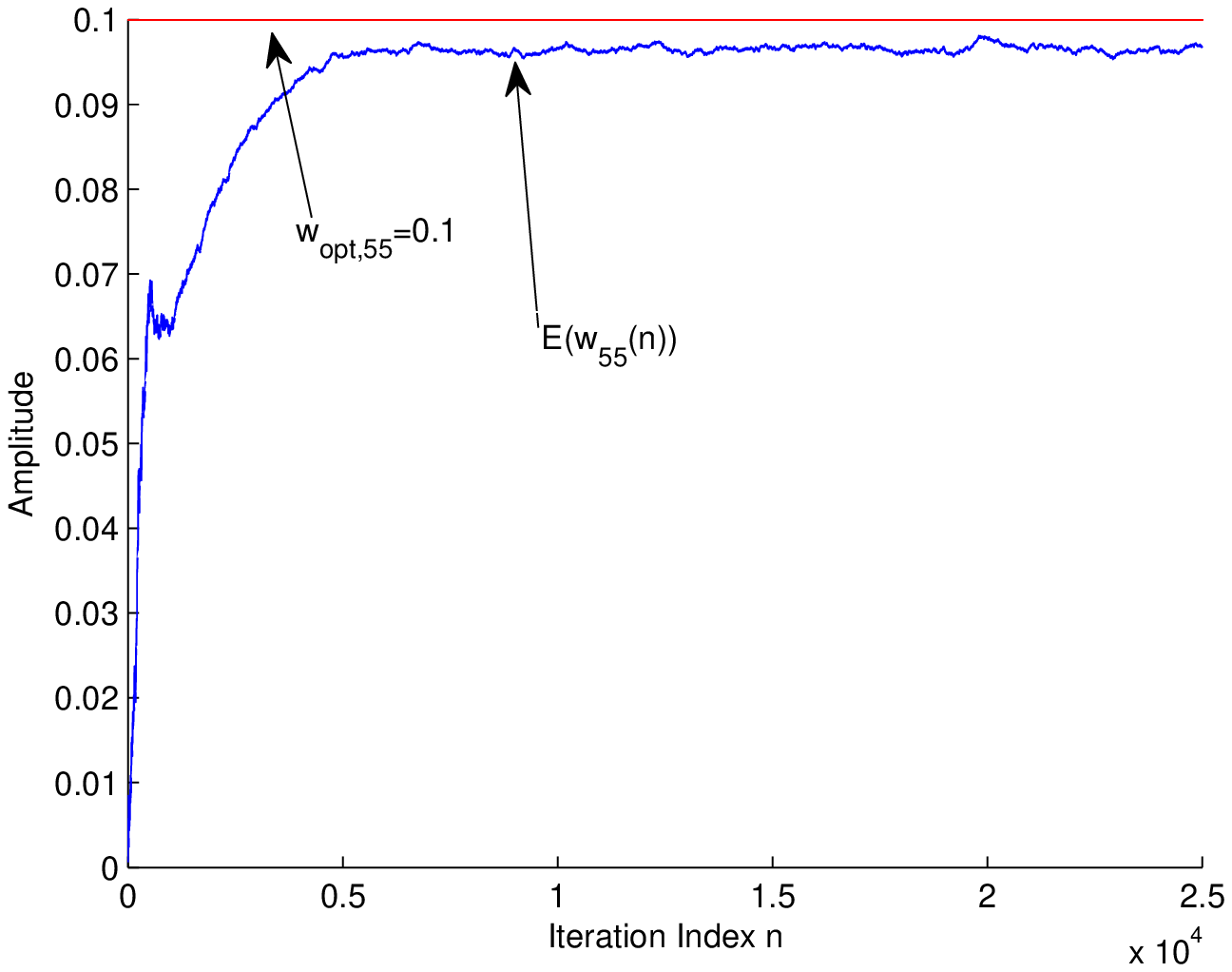}
%\includegraphics[width=175mm]{Wiener.eps}
%\fbox{ \rule[-17mm]{0pt}{30mm} FIGURE }
\end{center}
\caption{Evolution of $E(w_{55}(n))$ of the ZA-PNLMS algorithm  with its optimum level $w_{opt,55}=0.1$.} %\vspace*{-3pt}%{\hfill\footnotesize
%The symmetric property of the coefficients reduces the length of the
%coefficient vector by half.\hfill}
\end{figure}
\begin{figure}
\begin{center}
\includegraphics[width=150mm,height=110mm]{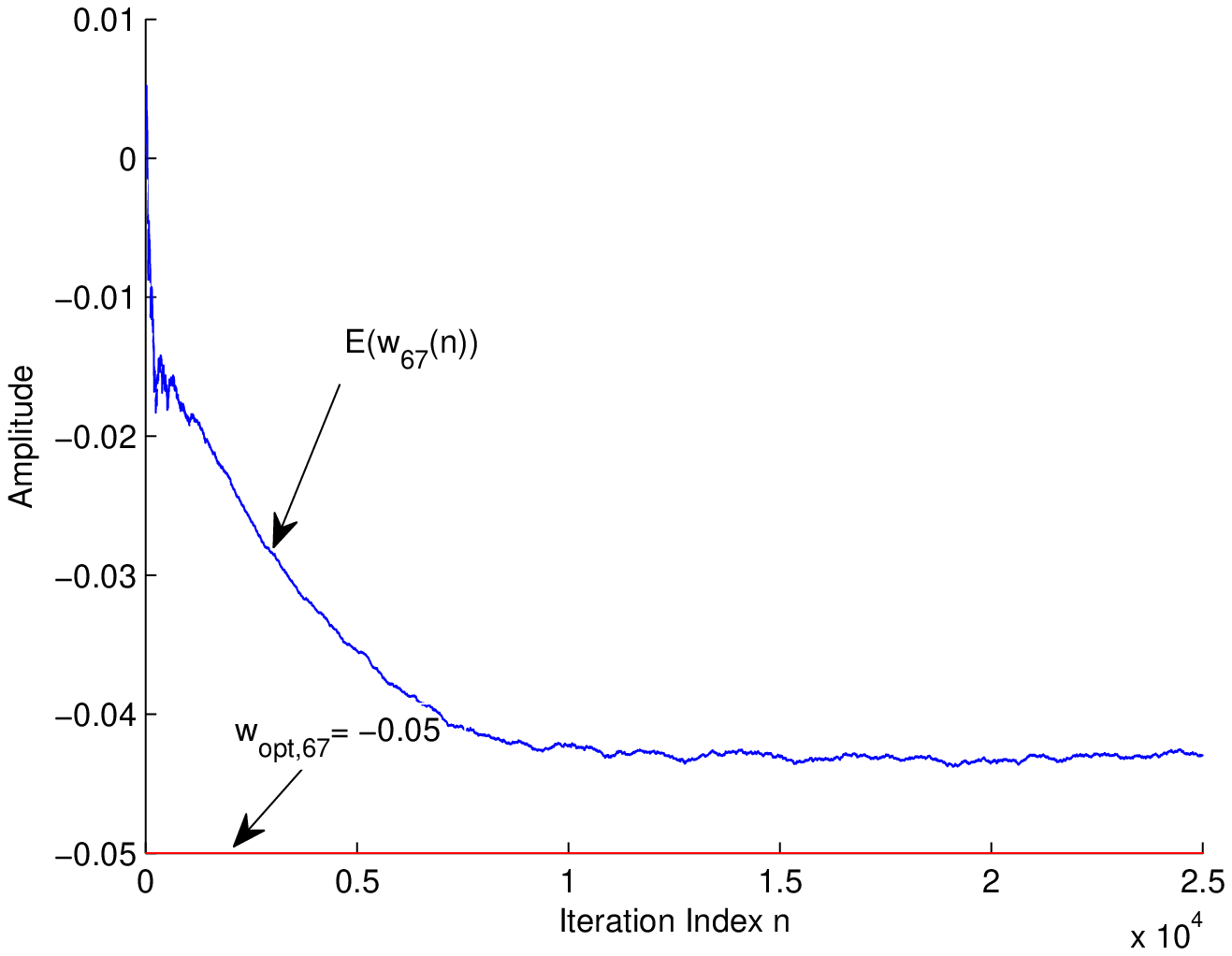}
%\includegraphics[width=175mm]{Wiener.eps}
%\fbox{ \rule[-17mm]{0pt}{30mm} FIGURE }
\end{center}
\caption{Evolution of $E(w_{67}(n))$ of the ZA-PNLMS algorithm  with its optimum level $w_{opt,67}=-0.5$.} %\vspace*{-3pt}%{\hfill\footnotesize
%The symmetric property of the coefficients reduces the length of the
%coefficient vector by half.\hfill}
\end{figure}

\begin{figure}
\begin{center}
\includegraphics[width=150mm,height=110mm]{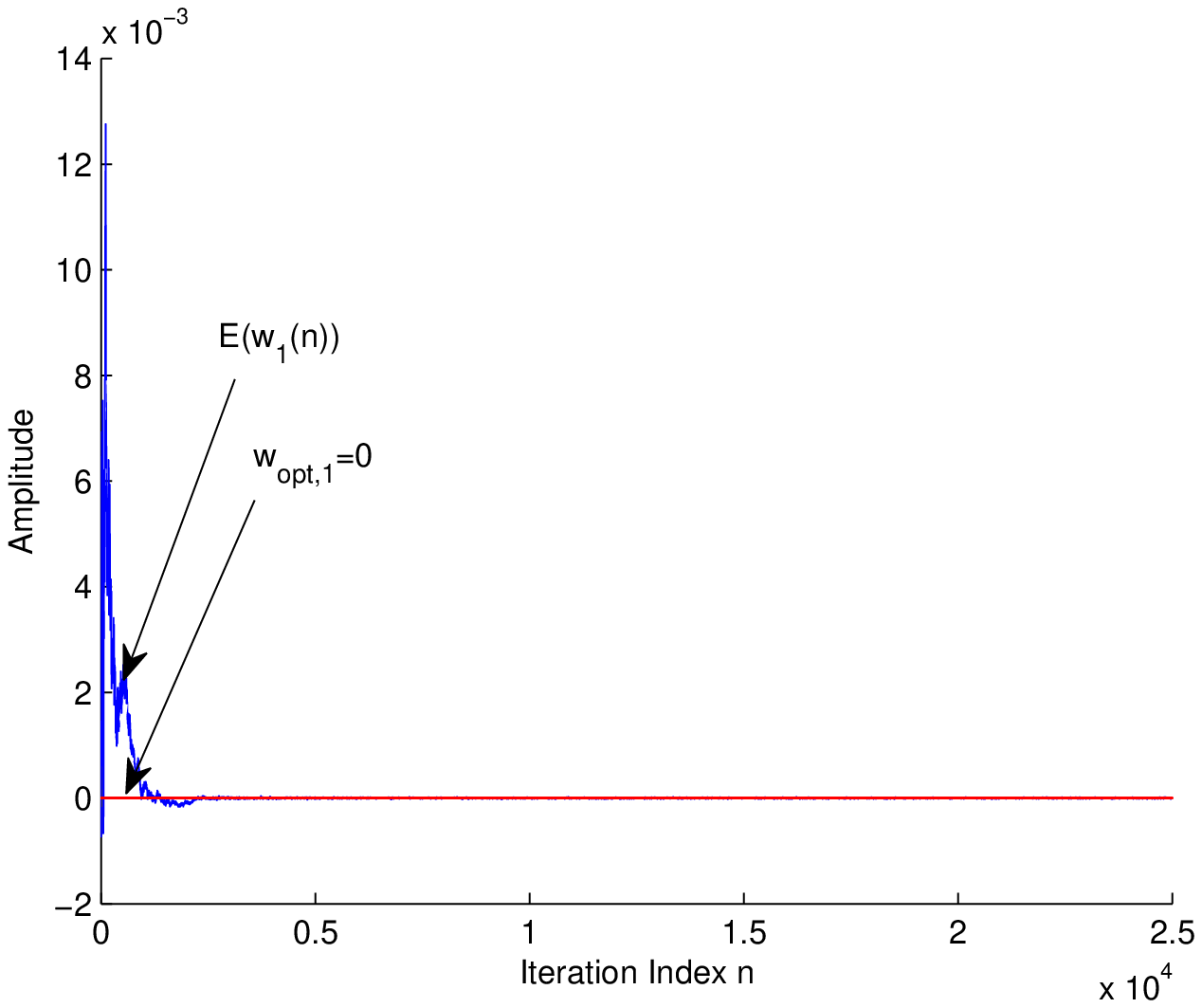}
%\includegraphics[width=175mm]{Wiener.eps}
%\fbox{ \rule[-17mm]{0pt}{30mm} FIGURE }
\end{center}
\caption{Evolution of $E(w_{1}(n))$ of the ZA-PNLMS algorithm  with its optimum level $w_{opt,1}=0$.} %\vspace*{-3pt}%{\hfill\footnotesize
%The symmetric property of the coefficients reduces the length of the
%coefficient vector by half.\hfill}
\end{figure}

\begin{figure}
\begin{center}
\includegraphics[width=150mm,height=110mm]{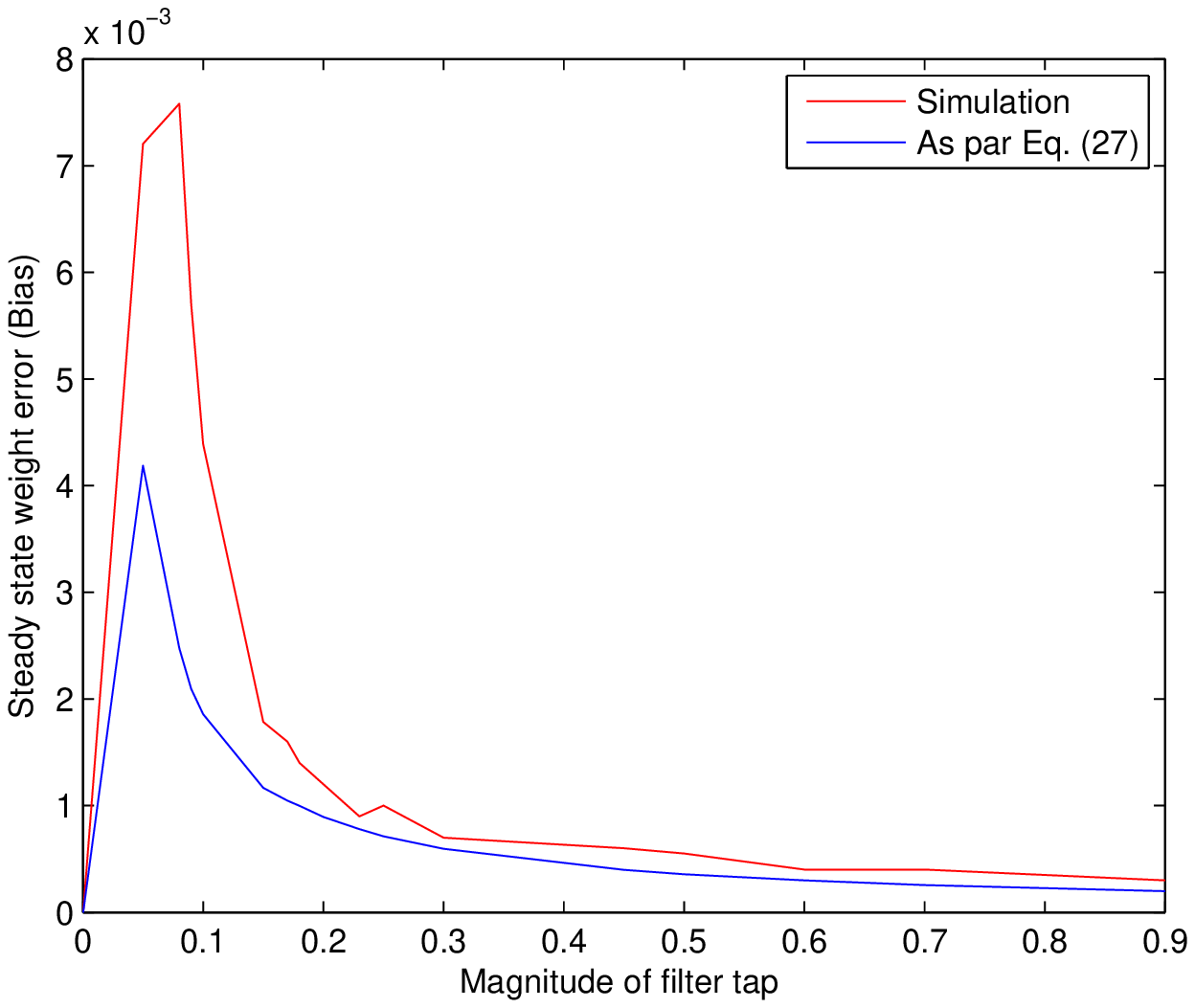}
%\includegraphics[width=175mm]{Wiener.eps}
%\fbox{ \rule[-17mm]{0pt}{30mm} FIGURE }
\end{center}
\caption{Evolution of $E(w_{1}(n))$ of the ZA-PNLMS algorithm  with its optimum level $w_{opt,1}=0$.} %\vspace*{-3pt}%{\hfill\footnotesize
%The symmetric property of the coefficients reduces the length of the
%coefficient vector by half.\hfill}
\end{figure}

% \bibliographystyle{IEEEtran}
% %\bibliographystyle{plain}
%
%
% %     \renewcommand\refname{References}
% %      \renewcommand\bibname{REFERENCES}
% \bibliography{refs}

%\setcounter
%\count
%\include{appendix1}
%\include{appendix1/appendix2}
%\section*{\text{Appendix A}}

%APPENDIX A

%\bibliographystyle{IEEEtran}
%\bibliography{refs}
\end{document}